\def\bea{\begin{eqnarray}}
\def\eea{\end{eqnarray}}
\def\be{\begin{equation}}
\def\ee{\end{equation}}
\theoremstyle{definition}
\newtheorem{defn}{Definition}
\newtheorem{prop}{Proposition}
\definecolor{mrainbow1}{rgb}{0.86, 0.13, 0.13}
\definecolor{mrainbow2}{rgb}{0.89,0.60,0.22}
\definecolor{mrainbow3}{rgb}{0.67, 0.74, 0.32}
\definecolor{mrainbow4}{rgb}{0.39, 0.67, 0.60}
\definecolor{mrainbow5}{rgb}{0.25, 0.39, 0.81}
\definecolor{mrainbow6}{rgb}{0.47, 0.11, 0.53}
\begin{document}

\preprint{}

\title{Inflation does not create entanglement in local observables}% Force line breaks with \\
%\thanks{}%
%Entanglemetn from in inflation in real space

\author{Patricia Ribes-Metidieri}
\email{patricia.ribesmetidieri@ru.nl}

\affiliation{Institute for Mathematics, Astrophysics and Particle Physics, Radboud University, 6525 AJ Nijmegen, The Netherlands}
%\affiliation{Department of Physics and Astronomy, Louisiana State University, Baton Rouge, LA 70803, USA}

\author{Ivan Agullo}
\email{agullo@lsu.edu}

\affiliation{Department of Physics and Astronomy, Louisiana State University, Baton Rouge, LA 70803, USA}
%\affiliation{Perimeter Institute for Theoretical Physics, Waterloo, Ontario, N2L 2Y5, Canada}

\author{B\'eatrice Bonga}
\email{bbonga@science.ru.nl}

\affiliation{Institute for Mathematics, Astrophysics and Particle Physics, Radboud University, 6525 AJ Nijmegen, The Netherlands}

\date{\today}% It is always \today, today,
             %  but any date may be explicitly specified

\begin{abstract}
Using modern tools of relativistic quantum information, we compare entanglement of a free, massive scalar field in the Bunch-Davies vacuum in the cosmological patch of de Sitter spacetime with that in Minkowski spacetime. There is less entanglement between spatially localized field modes in de Sitter, despite the fact that there is more entanglement stored in the field on large scales. 
This shows that inflation does not produce entanglement between local observables.
\end{abstract}
             
\maketitle
{\bf Introduction.} It has long been argued that cosmic inflation squeezes cosmological perturbations \cite{Grishchuk:1990bj,Albrecht:1992kf,Lesgourgues:1996jc,Kiefer:1998pb,Kiefer:1998qe,Kiefer:2008ku,Polarski:1995jg,Martin:2015qta,ack,Micheli:2022tld,Brahma:2023lqm,Brahma:2023uab,Brahma:2023hki,Brahma:2024yor,Bhattacharyya:2024duw,Martin_2022}, and thereby generates entanglement between perturbations with opposite wave numbers, $\vec{k}$ and $-\vec{k}$. Detecting any trace of this entanglement would confirm one of the pillars of modern early-universe cosmology: the quantum origin of cosmological perturbations. There has been a recent surge of interest in applying quantum information tools to this problem, both to quantify the entanglement generated and to identify sources of decoherence  during the post-inflationary  evolution \cite{Calzetta_1995,maldacena_entanglement_2013,Maldacena:2015bha,Martin:2015qta,Nelson_2016,ack,Martin_2018,Grain:2019vnq,Brahma:2020zpk,Brahma:2021mng,Martin:2021qkg,Agullo:2022ttg,Micheli:2022tld,Espinosa-Portales:2022yok,Bhardwaj:2023squ}. Various  arguments show that a significant portion of the primordial entanglement is likely to decohere (see, e.g., \cite{Calzetta_1995,Lombardo_2005,Bhattacharyya:2024duw,Burgess_2008,burgess2014efthorizonstochasticinflation,Burgess_2023,Brahma:2024yor}). Even if some portion remains, current cosmic microwave background (CMB) observations might be insufficient to detect it \cite{Maldacena:2015bha}. However, the possibility of observing entanglement is not entirely ruled out, particularly if primordial gravitational waves are eventually observed \cite{Micheli:2022tld}.

We critically review the claim that inflation imprints entanglement in cosmological observations.
%using modern tools of relativistic quantum information. 
Previous discussions have primarily focused on Fourier space (with important exceptions \cite{Martin:2021xml,Martin:2021qkg,K:2023oon}), but Fourier modes are inherently global, meaning that certain aspects of them are not accessible to local observers.

For entanglement, it is crucial to move away from Fourier space and work in real space, because localizing a mode in quantum field theory inevitably results in a mixed reduced density operator --- due to the correlations with other modes --- and this mixedness acts as a source of decoherence. Therefore, to quantify the entanglement generated by inflation accessible to us, it is essential to localize field observables within our Hubble horizon. This requires moving beyond the description in terms of uncoupled and uncorrelated Fourier modes.

Although the effects of decoherence during post-inflationary evolution and the limitations of our observational apparatuses are crucial for gauging the detectability of entanglement, we focus on a more fundamental aspect: how much entanglement is created during inflation in the first place.
%, in field modes that will eventually become accessible to us. 
To answer this, we will evaluate the entanglement present in the quantum state of cosmological perturbations at the end of inflation, before the universe reheats and interactions with other fields become significant.  
Given the absence of non-Gaussianities in the CMB on observable scales \cite{refId0}, we restrict to linear field theory.

%Quantifying entanglement in absolute terms is futile for our goal. Instead, it is more informative 
We compare the entanglement in de Sitter with that in flat space-time---the difference quantifies how much entanglement can be attributed to the inflationary expansion. The comparison is meaningful because equal-time slices in the cosmological patch of de Sitter are isometric to equal inertial time  slices of flat spacetime, allowing us to identify corresponding modes in the two spacetimes. %This comparison will help us quantifying how much entanglement can be attributed to the inflationary expansion.

{\bf Approach: symplectic invariants.} The von Neumann entropy associated with a region has predominantly been used to study entanglement in field theory. While a suitably regularized version of this quantity provides information about the entanglement between a region and its complement, von Neumann entropy is not well-suited for quantifying entanglement between localized sets of modes, because their reduced state is mixed. 
%For instance, entropy would be nonzero even if two modes are unentangled, due to correlations with third modes.

A complementary and powerful approach that has recently emerged (see, e.g., \cite{bianchi_entropy_2019,Martin:2021xml,Agullo:2023fnp,Perche:2023nde,Perche:2023lwo}) involves defining finite-dimensional subsystems by smearing the field operator with functions of compact support, and applying tools from Gaussian quantum information theory to quantify entanglement in these finite-dimensional systems. This strategy is particularly useful for free field theories and Gaussian states, which is the scenario considered here.

We use a free, real scalar field in the cosmological patch of de Sitter space-time to model scalar cosmological perturbations during inflation. The field has a non-vanishing mass $m$, which helps to control infrared divergences. The regime relevant for inflation is $m/H \ll 1$, where $H$ denotes the Hubble rate.

We define the vector $\hat {\bf R}(\vec x)=(\hat \Phi(\vec x), \hat \Pi(\vec x))$, where $\hat \Phi(\vec x)$ and $\hat \Pi(\vec x)$ denote the field operator and its conjugate momentum, respectively, satisfying canonical commutation relations $[\hat  R^i(\vec x), \hat R^j(\vec x')]=i\hbar \Omega^{ij}(\vec x,\vec x')$, where ${\bf \Omega}(\vec x,\vec x')=\begin{pmatrix} 0 & 1 \\ -1 & 0\end{pmatrix}\, \delta^{(3)}(\vec x-\vec x')$. 
%${\bm \omega}(\vec x,\vec x')$ is the inverse of  ${\bm \Omega}(\vec x,\vec x')$, which can be identified with the symplectic structure of the classical phase of space of the theory. 

Let $|0\rangle$ be the Bunch-Davies vacuum \cite{Bunch:1978yq}. This is a Gaussian state with zero ``mean'', $\langle 0| \hat R^i(\vec x)|0 \rangle=0$. Its two-point functions can be decomposed into its symmetric and anti-symmetric parts: $\langle \hat R^i(\vec x),\hat R^j(\vec x')\rangle=\frac{1}{2}\sigma^{ij}(\vec x,\vec x')+\frac{1}{2}\hbar \, \Omega^{ij}(\vec x-\vec x')$, with the anti-symmetric part state-independent. %, ${\bm \sigma}(\vec x,\vec x')$  encodes the state dependence. 
Gaussianity implies that all higher-order correlation functions can be obtained from ${\bm \sigma}$. Thus, ${\bm \sigma}$ encodes all the information of the state. 
%All quantities computed below will be obtained from it. 
%are encoded in the symmetric matrix-bi-distribution ${\bm \sigma}(\vec x,\vec x')$: covariance matrix. 

We consider Hermitian operators linear in $\hat \Phi(\vec x)$ and $\hat \Pi(\vec x)$. Each such operator can be labeled by an element of the classical phase space ${\bm \gamma}(\vec x)=(g(\vec x),f(\vec x))$ %{\color{green} We need to change either this definition to ${\bm \gamma}(\vec x)=(f(\vec x),g(\vec x))$ or the expressions of  the correlations below} 
via  \cite{Wald:1995yp}
\begin{equation}
\label{Ogamma} 
\hat O_{\gamma}
%&=&{\bm \omega}[{\bm \gamma},{\bf R}]
%:=
%\int_{\Sigma_{\eta}} d^3x\,d^3x'\,  \omega_{ij}(\vec x,\vec x')\, 
%\gamma^{i}(\vec x)\hat R^j(\vec x') \, \\
=\int_{\Sigma} d^3x \, \Big(f(\vec x)\, \hat \Phi(\vec x)-g(\vec x)\, \hat \Pi(\vec x)\Big)\, .
\end{equation}
%{\color{green} Add that $f$ is a density? Also, it may make more sense to compare with Minkowski if we separate the function from the $\sqrt{h}$.  }
%Labeling linear operators by elements of the phase space has a number of advantages. For instance, the commutator of two such operators is determined by the symplectic product of the corresponding phase space elements
%
%$$[\hat O_{\gamma},\hat O_{\gamma'}]=i\hbar \ {\bm \omega}[{\bm \gamma},{\bm \gamma'}].$$
%
%Consider a pair of noncommuting observables $(\hat O_{\gamma},\hat O_{\gamma'})$. They span a subalgebra of the field theory, which is isomorphic to that of a quantum harmonic oscillator. This 
%
We are interested in subsystems containing a single degree of freedom. 
%{\color{blue} Classically, such subsystems are in one-to-one correspondence with two-dimensional symplectic subspaces of the phase space.} 
In the algebraic approach to quantum field theory, subsystems are defined in terms of subalgebras: given a pair of canonically conjugated  observables $(\hat O_{\gamma^{(1)}},\hat O_{\gamma^{(2)}})$,  the subalgebra they span defines a single-mode subsystem. Notice that any pair of operators resulting from a linear symplectic transformation of $\hat O_{\gamma}^{(1)}$ and $\hat O_{\gamma^{(2)}}$ would span the same algebra, hence define the same single-mode subsystem. 
%defines a subsystem of the field theory containing with a single degree of freedom (a.k.a. single mode). Mathematically, the subsytem is defined by the subalgebra of the field theory these two observable span---which is isomorphic to that of a quantum harmonic oscillator.

Let A and B denote two localized and independent (i.e., commuting) single-mode subsystems. The localization of each mode is determined by the support of the two phase space elements defining them, $\gamma^{(i)}_A(\vec x)=(g_A^{(i)}(\vec x),f_A^{(i)}(\vec x))$, $i=1,2$, and similarly for B. %{\color{green} --We only assume these functions to be smooth and of  compact support. In the examples in Figs.~\ref{MI}-~\ref{fig:shellballmany}, we show that the smoothness condition is in fact not crucial.----OR-----To prove the propositions in this article, we assume that these functions are smooth and have compact support, although the examples in Figs.~ ~\ref{MI}-~\ref{fig:shellballmany} demonstrate that smoothness is not critical.} 
 We only assume these functions to be smooth and of compact support---although our results remain valid if smoothness is relaxed \cite{longpaper} (see also examples in Figs.~\ref{MI}–\ref{fig:shellballmany} below).

The reduced state $\hat \rho_{AB}$ obtained from the Bunch-Davies vacuum by tracing the rest of degrees of freedom is also a Gaussian state with zero mean. Its symmetric second moments are given in terms of the four operators 
$\hat{\bf R}_{AB} = (\hat O_{\gamma^{(1)}_A},\,   \hat O_{\gamma^{(2)}_A},\, \hat O_{\gamma^{(1)}_B},\,   \hat O_{\gamma^{(2)}_B})$
 as $\sigma_{AB}^{ij}=\langle 0|\hat  R^i_{AB}\hat R^j_{AB}+\hat  R^j_{AB}\hat R^i_{AB}|0\rangle$. 
The structure of ${\bm \sigma}_{AB}$ is
\be \label{sigmaAB} 
{\bm \sigma}_{AB}=\left(
\begin{matrix}
{\bm \sigma}_A & {\bm  C}\\ 
{\bm  C}^{\top} & {\bm  \sigma}_B \\
\end{matrix}\right),\ee 
where ${\bm \sigma}_A$ and ${\bm  \sigma}_B$ are $(2\times 2)$-matrices describing the symmetrized second moments of each single mode individually, while ${\bm C}$ is a $(2\times 2)$-matrix describing their correlations. %$\sigma_{AB}$ codifies the same information as $\hat \rho_{AB}$ 
%{\color{blue} All properties of the two-mode system will be determined from $\sigma_{AB}$.}

We are interested in computing the entropy of subsystems and the correlations between them. These quantities are properties associated with subsystems, and not to any choice of operators  $(\hat O_{\gamma^{(1)}_I},\hat O_{\gamma^{(2)}_I})$, $I=A,B$  within each subsystem.  
More precisely, these quantities are invariant under ``subsystem-local'' symplectic transformation. As such, they can be computed from the invariant scalars of ${\bm \sigma}_{AB}$, namely  ${\rm det} \, {\bm  \sigma}_A$, ${\rm det} \, {\bm \sigma}_B$, ${\rm det} \, {\bm  C}$ and ${\rm det}\, {\bm \sigma}_{AB}$. % are all invariant under subsystem-local symplectic transformation. 
The following six combinations of them will be particularly useful:
\bea \nu_I&\equiv&\sqrt{{\rm det} \, {\bm \sigma}_I},\  I=A,B, \nonumber\\   \nu_{\pm}^2&\equiv& (\Delta\pm\sqrt{ \Delta^2-4 \, {\rm det}\, {\bm \sigma}_{AB}})/2, \\  \nonumber  \tilde \nu_{\pm}^2&\equiv& (\tilde \Delta\pm\sqrt{ \tilde \Delta^2-4 \, {\rm det}\, {\bm  \sigma}_{AB}})/2, \eea 
where $\Delta={\rm det} \, {\bm  \sigma}_A+{\rm det}\,{\bm  \sigma}_B+2\, {\rm det} \,{\bm C}$ and 
$\tilde \Delta={\rm det} \,{\bm  \sigma}_A+{\rm det}\, {\bm  \sigma}_B-2 \, {\rm det}\, {\bm  C}$.
%\bea \Delta&=&{\rm det} \sigma_A+{\rm det} \sigma_B+2 {\rm det} C\, ,\\ \tilde \Delta&=&{\rm det} \sigma_A+{\rm det} \sigma_B-2 {\rm det} C\, ,\eea
%
%From Eqs.~\eqref{Ogamma} and \eqref{sigmaAB}, one can see that 
The calculation of these invariants boils down to computing symmetrized expectation values of the smeared operators $\hat{\Phi}[f_I]=\int_{\Sigma} f_I \hat{\Phi}$ and $\hat{\Pi}[g_I]=\int_{\Sigma} g_I \hat{\Pi}$. 

We are interested in field modes which, at the end of inflation, are supported on ``super-Hubble'' regions, i.e., regions of physical size $R\gg H^{-1}$, because these are the modes that become accessible in observations today---%because these modes are affected by the spacetime curvature and become accessible in observations today. 
the minimum primordial wavelength resolvable in the CMB, which is of the order of $10^4$ Mpc today, at the end of inflation was $>e^{50}$ times the Hubble radius for typical inflationary models \cite{Liddle_2003}.

In the regime of interest, namely $RH\gg1$ and $m/H\ll 1$, the symmetrized expectation values of $\hat{\Phi}[f_I]$ and $\hat{\Pi}[g_I]$ in the Bunch-Davies vacuum are
\begin{widetext} \bea\label{phiphi} %\begin{split}
         \braket{\{\hat{\Phi}[f_I],\hat{\Phi}[f_J]\}} &=&  \mathrm{Re}\left[(f_I|f_J)_{-\frac{1}{2}} \right]+ \frac{2^{2-2{\mu}^2}\pi\, (RH)^{2-2{\mu}^2}}{\cos^2(\pi {\mu}^2) \Gamma \left(- \frac{1}{2} + {\mu}^2\right)^2 } \,  \mathrm{Re}\left[(f_I|f_J)_{-\frac{3}{2}+{\mu}^2}\right] +\mathcal{O}({\mu}^2 \log (RH)) \,, \\ 
 \label{pipi}
       \braket{\{\hat{\Pi}[g_I],\hat{\Pi}[g_J]\}} &=& \mathrm{Re}\left[(g_{I}| g_J)_{\frac{1}{2}} \right]+ \mathcal{O}({\mu}^2\log(RH))\,, \\
\label{phipi}
       \braket{\{\hat{\Phi}[f_I],\hat{\Pi}[g_J]\}} &=& \frac{2^{1-{\mu}^2}\sqrt{\pi} \, (RH)^{1-{\mu}^2}}{\cos(\pi {\mu}^2) \Gamma \left(-\frac{1}{2} + {\mu}^2\right)}\,  \mathrm{Re}\left[(f_{I}| g_J)_{-\frac{1- {\mu}^2}{2} } \right]+\mathcal{O}({\mu}^2 (RH)^{1-{\mu}^2})\,,
\eea
\end{widetext}
where ${\mu}^2 = \frac{3}{2} - \sqrt{\frac{9}{4} - \frac{m^2}{H^2}}\ll 1$ and the Sobolev product of order $s$ is %{\color{green} Change Sobolev inner product to depend on $\vec{q}:= \vec{k} R$ instead of $\vec{k}$}
\begin{equation}
    (f|g)_s = \int \frac{d^3 q}{(2\pi)^3} |\vec{q}|^{2s} \tilde{f}(\vec{q})\, \bar{\tilde{g}}(\vec{q})\,,
\end{equation}
with $\vec{q} := R \vec{k}$ a dimensionless wave vector. The last term in Eqs.~\eqref{phiphi}, \eqref{pipi} and \eqref{phipi} is subdominant when $RH\gg1$ and $m/H\ll 1$. 

In the limit $H\to 0$ only the first term in \eqref{phiphi} and \eqref{pipi} survives, \eqref{phipi} vanishes,  and these expressions reduce to the smeared two-point functions in Minkowski spacetime \cite{Agullo:2023fnp}. When $H\neq 0$,  $\mathrm{Re}\left[(f_A,f_B)_{-\frac{3}{2}+ \mu^2}\right] \sim \Delta x_{AB}^{ -2  \mu^2}$, where  $\Delta x_{AB}$ denotes the physical distance between the regions of support of $f_A$ and $f_B$;  %{\color{green} Here, we are implicitly assuming that the regions $A$ and $B$ are spherical!}; 
this term is responsible for the characteristic almost scale-invariant correlations generated by inflation, and is infrared divergent in the limit $m\to 0$, accounting for the well-known infrared divergence of the Bunch-Davies vacuum.\\

{\bf Entropy.} 
The von Neumann entropy of a single mode, say A, can be computed as
%Nevertheless, as noted in~\cite{}, the entropy of a single mode is a quantifier of the entanglement between a single mode of the field and its partner~\cite{}. The von Neumann entropy of the reduced Gaussian state obtained from tracing out the single-mode subsystem $B (A)$,   $\hat{\rho}_{A(B)}$,  
\be \label{eq:vNEntropy_nu}
S(\nu_A) = \frac{\nu_A +1}{2} \log_2 \left(\frac{\nu_A +1}{2}  \right) -\frac{\nu_A -1}{2} \log_2 \left(\frac{\nu_A -1}{2}  \right) \,.
\ee 
%With these preliminaries, we present the first result of this work: 
\begin{prop}
     When $R H \gg 1$, the von Neumann entropy of any single-mode subsystem of a scalar field prepared in the Bunch-Davies vacuum is equal to or greater than that of the same mode in the Minkowski vacuum of flat spacetime. 
\end{prop}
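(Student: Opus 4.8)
The plan is to reduce the claim to a Cauchy--Schwarz inequality. First I would note that the single-mode entropy \eqref{eq:vNEntropy_nu} is a strictly increasing function of $\nu_A$ on the physical range $\nu_A\ge 1$, since $dS/d\nu_A=\tfrac{1}{2}\log_2\!\big(\tfrac{\nu_A+1}{\nu_A-1}\big)>0$, and the uncertainty principle guarantees $\nu_A\ge 1$ for any Gaussian state. It therefore suffices to prove that the symplectic eigenvalue $\nu_A=\sqrt{\det{\bm\sigma}_A}$---equivalently $\det{\bm\sigma}_A$---is no smaller in the Bunch--Davies vacuum than in Minkowski.

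By subsystem-local symplectic invariance I am free to represent the mode by the canonically conjugate pair $\hat\Phi[f_A],\hat\Pi[g_A]$ (normalized so that $\int f_A g_A=1$), for which $\det{\bm\sigma}_A$ is the product of the two diagonal symmetrized moments minus the square of the off-diagonal one. Using \eqref{phiphi}--\eqref{phipi}, the $H\to 0$ limit reproduces the Minkowski value $\det{\bm\sigma}_A^{\rm M}=(f_A|f_A)_{-1/2}\,(g_A|g_A)_{1/2}$, because the $\hat\Phi\hat\Pi$ correlator vanishes there. Subtracting, the leading de Sitter correction enters at order $(RH)^{2-2\mu^2}$. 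The key algebraic fact is that the coefficient of the $(RH)^{2-2\mu^2}$ term in \eqref{phiphi} equals the square of the coefficient of the $(RH)^{1-\mu^2}$ term in \eqref{phipi}; this common positive prefactor then multiplies the bracket
\be
(f_A|f_A)_{-\frac{3}{2}+\mu^2}\,(g_A|g_A)_{\frac{1}{2}}-\big(\mathrm{Re}\,(f_A|g_A)_{-\frac{1-\mu^2}{2}}\big)^2 .
\ee

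The final step is to show this bracket is non-negative. The crucial observation is that the Sobolev orders line up, $\tfrac{1}{2}\big[(-\tfrac{3}{2}+\mu^2)+\tfrac{1}{2}\big]=-\tfrac{1-\mu^2}{2}$, so that $(f_A|g_A)_{-\frac{1-\mu^2}{2}}$ is the $L^2(d^3q)$ pairing of $|\vec q|^{(-3+2\mu^2)/2}\tilde f_A$ with $|\vec q|^{1/2}\tilde g_A$. Cauchy--Schwarz then bounds its modulus squared---and a fortiori the square of its real part---by the product $(f_A|f_A)_{-\frac{3}{2}+\mu^2}(g_A|g_A)_{\frac{1}{2}}$, so the bracket, and hence $\det{\bm\sigma}_A^{\rm dS}-\det{\bm\sigma}_A^{\rm M}$, are $\ge 0$. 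Because this holds for every admissible $(f_A,g_A)$, the inequality is universal over all single-mode subsystems.

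The main obstacle I anticipate is not the leading-order estimate but the bookkeeping around it: one must check that $\nu_A\ge 1$ actually holds for the approximated ${\bm\sigma}_A$ so the monotonicity step is legitimate, and, more importantly, that the subleading remainders $\mathcal{O}(\mu^2\log(RH))$ and $\mathcal{O}(\mu^2(RH)^{1-\mu^2})$ in \eqref{phiphi}--\eqref{phipi} cannot reverse the sign of the $\mathcal{O}((RH)^{2-2\mu^2})$ difference in the regime $RH\gg1$, $m/H\ll1$. Since those corrections are parametrically smaller than the retained term, controlling them should be a matter of uniform estimates rather than a genuine difficulty.
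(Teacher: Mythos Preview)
Your reduction to monotonicity of $S(\nu_A)$ and then to $\det{\bm\sigma}_A$ is correct, and your Cauchy--Schwarz step for the bracket $(f|f)_{-3/2+\mu^2}(g|g)_{1/2}-[\mathrm{Re}(f|g)_{-(1-\mu^2)/2}]^2$ is exactly what the paper uses (there called H\"older's inequality) for one of its cases. The observation that the Sobolev orders line up so that the cross term is the $L^2$ pairing of the two weighted transforms is the right one.

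The gap is the sentence ``by subsystem-local symplectic invariance I am free to represent the mode by the canonically conjugate pair $\hat\Phi[f_A],\hat\Pi[g_A]$.'' This is false for a general single-mode subsystem. Such a subsystem is specified by two phase-space elements $\gamma^{(i)}_A=(g^{(i)}_A,f^{(i)}_A)$, $i=1,2$, involving four independent smearing functions. A local symplectic transformation is merely a $2\times2$ matrix that mixes $\gamma^{(1)}_A$ and $\gamma^{(2)}_A$; it can produce a pure-field element $(0,f)$ only if $g^{(1)}_A$ and $g^{(2)}_A$ are linearly dependent as functions, which is non-generic. Your argument therefore covers only a special subclass of modes.

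The paper deals with this by a case split on $f^{(1)}_A$ and $f^{(2)}_A$. In the generic case $f^{(1)}_A\neq f^{(2)}_A$, both nonzero, the leading de~Sitter correction to $\nu_A^2$ is at order $(RH)^{4-4\mu^2}$, not $(RH)^{2-2\mu^2}$, with coefficient
\[
\mathfrak{a}_A=\|f^{(1)}_A\|_{-\frac{3}{2}+\mu^2}^2\|f^{(2)}_A\|_{-\frac{3}{2}+\mu^2}^2-\big[\mathrm{Re}(f^{(1)}_A|f^{(2)}_A)_{-\frac{3}{2}+\mu^2}\big]^2\ge 0,
\]
positivity again by Cauchy--Schwarz but now for the $-\tfrac{3}{2}+\mu^2$ Sobolev product. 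Your computation corresponds to the degenerate case $f^{(1)}_A=0$ (or $f^{(1)}_A=f^{(2)}_A$), where the $(RH)^4$ term drops out and the $(RH)^2$ term you wrote is indeed leading. To cover ``any single-mode subsystem'' you must treat the generic case as well.
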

%Although this result applies to any single-mode subsystem, for brevity we restrict to the regime of interest for this article, namely $R H \gg 1$. { \color{green} This comment is certainly true for the restricted dofs, but I'm not so sure for general degrees of freedom.}
\begin{proof} 
 Throughout this article, we compare the small mass limit in de Sitter, $m/H\ll 1$, with the massless limit in Minwkoski. The absence of a scale in Minkowski makes this comparison physically sound. Furthermore, in Minkowski spacetime entropies and correlations become insensitive to $m$ in the limit $m R\ll1$ \cite{Agullo:2022ttg}, implying that maintaining  $m\neq 0$ when comparing with flat spacetime will not change the results.

    $S(\nu_A)$  monotonically increases with $\nu_A$. Thus, it suffices to show that $\nu_A^2 $ is larger in the Bunch-Davies vacuum than in the Minkowski vacuum. %In other words, it suffices to show that $\nu_A^2 - (\nu_{A}^{\mathrm{Mink}})^2 \geq 0$ when $\mathfrak{R}_H \gg 1$.  
    As discussed above, mode A can be defined from two canonically conjugate operators $(\hat O_{\gamma_A^{(1)}},\hat O_{\gamma_A^{(2)}})$, with $\gamma_A^{(i)}=(g_A^{(i)},f_A^{(i)})$, $i=1,2$. 
    Using~\eqref{phiphi}-\eqref{phipi}, the leading contributions to $\nu_A^2 - (\nu_{A}^{\mathrm{Mink}})^2$ can be expressed as a polynomial in $RH$. For sufficiently large $RH$, the sign of $\nu_A^2 - (\nu_{A}^{\mathrm{Mink}})^2$ is determined from the coefficient of the leading power in $R H$. This coefficient depends on the choice of the smearing functions $f_A^{(1)}$ and $f_A^{(2)}$. There are three cases to consider: %one can write the leading contributions of  $\nu_I^2 - (\nu_{I}^{\mathrm{Mink}})^2$ as a polynomial in $\mathfrak{R}_H$ in this regime.  For sufficiently large $\mathfrak{R}_H$, the sign of $\nu_I^2 - (\nu_{I}^{\mathrm{Mink}})^2$ is determined by the sign of the coefficient accompanying the leading power in $\mathfrak{R}_H$. From Eq~\eqref{}, one can see that the leading order (and therefore, the associated coefficient) in the $\nu_I^2 - (\nu_{I}^{\mathrm{Mink}})^2$ expansion depends on the choice of the smearing functions  $f_I^{(1)}$ and  $f_I^{(2)}$  There are three cases that one must analyze: 
    (1)  $f_A^{(1)} \neq f_A^{(2)}$, with both functions different from zero; (2) $f_A^{(1)} = f_A^{(2)}\neq 0$; and (3) $f_A^{(1)}=0$, $f_A^{(2)} \neq 0$ (or vice-versa). %{\color{blue} It seems weird that $g_A^{(i)}$ does nor play any role on this separation of cases.}
    
    In case (1), we find %Eq.~\eqref{}, it follows that the leading power of $\mathfrak{R}_H$ in the expansion of $\nu_I^2 - (\nu_{I}^{\mathrm{Mink}})^2$  depends on the choice of the smearing functions  $f_I^{(1)}$ and  $f_A^{(2)}$.
        $\nu_A^2 -(\nu_{A}^{\mathrm{Mink}})^2= \mathfrak{a}_A\, (RH)^{4-4{\mu}^2}(1 + \mathcal{O}({\mu}^2)) +\mathcal{O}\big((RH)^{3-3{\mu}^2}\big)\,,$ 
    with $$\mathfrak{a}_I = ||f_A^{(1)}||_{-\frac{3}{2}+{\mu}^2}^2||f_A^{(2)}||_{-\frac{3}{2}+{\mu}^2}^2 - \mathrm{Re}(f_A^{(1)}|f_A^{(2)} )_{-\frac{3}{2}+{\mu}^2}^2\,.$$
    The Cauchy-Schwarz inequality satisfied by  Sobolev products 
        %Since homogeneous Sobolev spaces of order $|s| < 3/2$ are Hilbert spaces with inner product given in Eq.~\eqref{}~\cite{}, the Cauchy-Schwarz inequality 
 implies that $\mathfrak{a}_A > 0$. Hence, $ \nu_A^2 -(\nu_{A}^{\mathrm{Mink}})^2 > 0 $ when $R H\gg 1$. In case (2), we find that $ \nu_A^2 -(\nu_{A}^{\mathrm{Mink}})^2 = \mathfrak{b}_A \, (RH)^{2-2{\mu}^2} (1 + \mathcal{O}({\mu}^2)) + \mathcal{O}\big((R H)^{1-{\mu}^2}\big)\,,$ where 
    $$\mathfrak{b}_A = ||f_A||^2_{-\frac{3}{2}+{\mu}^2} ||g_A^{(1)}-g_A^{(2)}||_{\frac{1}{2}}^2 - \mathrm{Re} (f_A|g_A^{(1)}-g_A^{(2)})_{-\frac{1-{\mu}^2}{2}}^2\,.$$ 
    Applying H\"older's inequality (see, e.g.~\cite{bahouri_fourier_2011})  for $s'\in [-1,1]$ and $s=-(1- {\mu}^2)/2$ , we find $\mathfrak{b}_A \geq 0$. In case (3) we find $\nu_A^2 -(\nu_{A}^{\mathrm{Mink}})^2= \mathfrak{c}_A \, (R H)^{2-2{\mu}^2}(1 + \mathcal{O}({\mu}^2)) + \mathcal{O}\big((RH)^{1-{\mu}^2}\big)$, with $$ \mathfrak{c}_I = ||f_A^{(2)}||^2_{-\frac{3}{2}+{\mu}^2} ||g_A^{(1)}||_{\frac{1}{2}}^2 - \mathrm{Re} (f_A^{(2)}|g_A^{(1)})_{-\frac{1-{\mu}^2}{2}}^2\,. $$
   % It follows from H\"older's inequality (see, e.g.~\cite{}) that $\mathrm{Re} (f|g_I^{(1)}-g_I^{(2)})_{s}^2 \leq ||f||_{s+s'}^2||g_I^{(1)}-g_I^{(2)}||_{s-s'}^2$ for $s'\in [-1,1]$. Applying this inequality with $s=-(1- \bar{\mu}^2)/2$ we find that $\mathfrak{b}_I \geq 0$.  Finally, in case 3) we find that  $\nu_I^2 -(\nu_{I}^{\mathrm{Mink}})^2= \mathfrak{c}_I \mathfrak{R}_H^{2-2\bar{\mu}^2}(1 + \mathcal{O}(\bar{\mu}^2)) + \mathcal{O}(\mathfrak{R}_H^{1-\bar{\mu}^2})$, with $$ \mathfrak{c}_I = ||f_I^{(2)}||^2_{-\frac{3}{2}-\bar{\mu}^2} ||g_I^{(1)}||_{\frac{1}{2}}^2 - \mathrm{Re} (f_I^{(2)}|g_I^{(1)})_{-\frac{1-\bar{\mu}^2}{2}}^2\,. $$
 % 
  Following the same argument as in case (2), we find  $\mathfrak{c}_I \geq 0$.  %In cases (2) and (3), the inequalities can be saturated, i.e., $\mathfrak{b}_I =0$ and $\mathfrak{c}_I=0$. 
  Saturation of these inequality can occur for special functions; the proof is modified in that case and will be given in~\cite{longpaper}.  
  %Saturation occurs when the functions $f^{(1)}_I$ and $f^{(2)}_I$ {\color{red} are insensitive to curvature at large scales }{\color{blue} IA: I don't think we provide enough info for this comment to be understood by readers. In that case, we may want not to mention when saturation happens; or give more details}.  For such functions, one can show that the massless limit is well-defined and $\nu_{I}^2 - (\nu_I^{\mathrm{Mink}})^2 =0$ in this limit.
   %{\color{red} We still need to comment on what happens when the inequalities in cases 2) and 3) are saturated, i.e. when $\mathfrak{b}_I =0$ and $\mathfrak{c}_I=0$. The equality in the previous expressions can only be attained if the functions $f_I$ in case 2) and $f^{(2)}_I$ in case 3) are unable to feel the effect of curvature at large scales. Moreover, for this family of functions, one can show that the massless limit is well-defined and $\nu_{I}^2 - (\nu_I^{\mathrm{Mink}})^2 =0$ in this limit. This concludes the proof. 
   % The inequality  $\mathfrak{b}_I \geq 0$ is saturated when $|\tilde{f}(\vec{  q})||\vec{q}|^{s'} = |\tilde{f}(\vec{  q})||\vec{q}|^{-s'}$ almost everywhere. 
   \end{proof}

$S(\nu_A)$ quantifies the entanglement between mode A and the rest of the field degrees of freedom. Hence, such entanglement is larger in the Bunch-Davies vacuum than in Minkowski vacuum. This result aligns with previous calculations of the entropy of a region \cite{maldacena_entanglement_2013}.  \\

{\bf Correlations.} Mutual information
%, being invariant under system-local symplectic transformations, 
provides an invariant way of quantifying the correlation between two single-mode subsystems. In terms of the entropy of each subsystem, it is  
\begin{equation}
  \mathcal{I}(A,B) =   S(\nu_A) +   S(\nu_B) -   S(\nu_+) - S(\nu_-) \,.
\end{equation}
Fig.~\ref{MI} shows an illustrative example.
\begin{prop}
    The mutual information between two single-mode subsystems of a scalar field in the Bunch-Davies vacuum is greater than in the Minkowski vacuum of flat spacetime when the supports of A and B and their separation are larger than the Hubble radius. 
    %
    %(1) When the supports of  modes $A$ and $B$,  are small compared to the Hubble radius ($R_I H \ll $, I=A,B), but the distance between them is large  ($\Delta x_{AB}  H\gg 1$); and (2)  When the supports of A and B are are large ($R_I H \gg 1$), regardless of their separation. {\color{blue} Fro consistency with other sections, should we restrict to case (2) and just  mention }
\end{prop}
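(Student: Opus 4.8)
The plan is to reduce the four-entropy combination $\mathcal{I}(A,B)$ to a single determinant ratio plus a \emph{universally bounded} remainder, and then show that this ratio grows without bound for the Bunch--Davies state while staying finite in Minkowski. From $S'(\nu)=\tfrac12\log_2\tfrac{\nu+1}{\nu-1}$ one gets $S(\nu)=\log_2\nu+(\log_2 e-1)-D(\nu)$, where $D(\nu)=\tfrac{1}{\ln2}\int_\nu^\infty\!\big(\tfrac12\ln\tfrac{t+1}{t-1}-\tfrac1t\big)\,dt$ is strictly decreasing with $D(\infty)=0$, so that $0\le D(\nu)\le D(1)=\log_2 e-1$ for every $\nu\ge1$. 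Since $\nu_A\nu_B=\sqrt{\det{\bm\sigma}_A\det{\bm\sigma}_B}$, $\nu_+\nu_-=\sqrt{\det{\bm\sigma}_{AB}}$, and the constant $(\log_2 e-1)$ cancels,
\[
\mathcal{I}(A,B)=\tfrac12\log_2\!\frac{\det{\bm\sigma}_A\,\det{\bm\sigma}_B}{\det{\bm\sigma}_{AB}}-\big[D(\nu_A)+D(\nu_B)-D(\nu_+)-D(\nu_-)\big],
\]
and the bracket is bounded in absolute value by $2(\log_2 e-1)$ for any Gaussian two-mode state.

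Next I would evaluate $Q\equiv\det{\bm\sigma}_A\det{\bm\sigma}_B/\det{\bm\sigma}_{AB}$ in the super-Hubble regime. By Eqs.~\eqref{phiphi}--\eqref{phipi} the $\Phi\Phi$ correlations dominate when $RH\gg1$, so to leading order ${\bm\sigma}_{AB}^{\mathrm{BD}}\propto (RH)^{2-2\mu^2}\,{\bm G}$, with ${\bm G}$ the Gram matrix of the field smearings in the order-$(-\tfrac32+\mu^2)$ Sobolev product. Substituting this into $Q$, the powers of $RH$ cancel between numerator and denominator, and $Q^{\mathrm{BD}}$ reduces to a ratio of Sobolev Gram determinants independent of $RH$.

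The decisive point is the behavior of these Gram determinants as $m/H\to0$. The order-$(-\tfrac32+\mu^2)$ self-products diverge like $1/\mu^2$---precisely the infrared divergence of the Bunch--Davies vacuum noted below Eq.~\eqref{phipi}---and since every compactly supported smearing shares the same $\vec q\to0$ limit $\propto\!\int f_i$, this divergence lives in a single, common infrared direction. That direction contributes one factor of $1/\mu^2$ to $\det{\bm\sigma}_{AB}$, but---being shared by $A$ and $B$---it contributes one such factor to each of $\det{\bm\sigma}_A$ and $\det{\bm\sigma}_B$ separately, so that $Q^{\mathrm{BD}}\sim1/\mu^2\to\infty$ and $\mathcal{I}^{\mathrm{BD}}\ge\tfrac12\log_2 Q^{\mathrm{BD}}-2(\log_2 e-1)\to\infty$. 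By contrast, in Minkowski the correlations are carried by the order-$(-\tfrac12)$ product, which is infrared finite for compactly supported smearings, so $Q^{\mathrm{Mink}}$ and hence $\mathcal{I}^{\mathrm{Mink}}$ remain bounded. Comparing the two bounds yields $\mathcal{I}^{\mathrm{BD}}>\mathcal{I}^{\mathrm{Mink}}$ once $m/H$ is small enough---exactly the inflationary regime.

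The main obstacle is organizing the Gram-determinant estimate across the three cases already met in Proposition 1---according to how many of $f_A^{(1)},f_A^{(2)},f_B^{(1)},f_B^{(2)}$ are linearly independent---since the power of $RH$ in each determinant (though not the final $1/\mu^2$ divergence of $Q^{\mathrm{BD}}$, which is universal because the $RH$ factors always cancel in the ratio) depends on the case. In each case one must check that the coefficient multiplying $1/\mu^2$ is strictly positive; this is a ``variance-type'' combination of the $O(1)$ finite parts of the Sobolev products, and I would sign it with the Cauchy--Schwarz and H\"older inequalities exactly as in Proposition 1. I expect the delicate step to be confirming that this coefficient does not vanish for generic super-Hubble supports and separations; the degenerate smearings for which it saturates require a separate treatment, deferred to~\cite{longpaper}.
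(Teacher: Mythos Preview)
Your decomposition $\mathcal{I}=\tfrac12\log_2 Q-[D\text{-bracket}]$ with the uniform bound $|[\cdot]|\le 2(\log_2 e-1)$ is correct and in fact cleaner than the paper's large-$\nu$ approximation, and both routes land on the same determinant ratio $Q=\det{\bm\sigma}_A\det{\bm\sigma}_B/\det{\bm\sigma}_{AB}$. But the paper finishes by a different mechanism: rather than driving $\mu\to0$ and exploiting the shared rank-one infrared divergence to make $Q^{\mathrm{BD}}\sim1/\mu^2$ blow up, it keeps $\mu$ fixed and small and compares the \emph{large-separation} decay of the cross-Sobolev products---$\mathrm{Re}(f_A|f_B)_{-3/2+\mu^2}\sim(\Delta x_{AB})^{-2\mu^2}$ in de Sitter versus $(f_A|f_B)_{-1/2}\sim(\Delta x_{AB})^{-2}$ in Minkowski---to obtain $\mathcal{I}^{\mathrm{BD}}\sim(\Delta x_{AB})^{-4\mu^2}$ against $\mathcal{I}^{\mathrm{Mink}}\sim(\Delta x_{AB})^{-4}$. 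This is precisely why the proposition hypothesizes super-Hubble \emph{separation}: that is what makes the almost-scale-invariant de Sitter tail overtake the Minkowski power law at fixed mass. Your argument never uses the separation hypothesis---you take an orthogonal limit (small $\mu$, any separation) which also reaches the conclusion and even yields the quantitative lower bound $\mathcal{I}^{\mathrm{BD}}\gtrsim\tfrac12\log_2(1/\mu^2)$, but it leaves the role of the separation condition unexplained and trades the stated regime (``$RH\gg1$ and $\Delta x_{AB}H\gg1$ at fixed small $\mu$'') for a slightly different one (``$\mu$ sufficiently small at fixed super-Hubble configuration'').
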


\begin{comment}
\begin{prop}
    The mutual information of the single-mode subsystems $(\hat{O}_A^{(1)},\hat{O}_A^{(2)})$ and  $(\hat{O}_B^{(1)},\hat{O}_B^{(2)})$ of a scalar field in the Bunch-Davies vacuum of de Sitter spacetime is larger than in the mutual information of corresponding modes of a massless scalar field in the vacuum of Minkowski spacetime in the regimes when the curvature effects dominate, i.e., when 1) the size of the supports of the modes $A$ and $B$, given by $R_I:= (3/(4\pi))^{1/3}V_{I}^{1/3}$ with $V_I$  the volume of the support of mode $I = \{A,B\}$, are small compared to the Hubble radius ($R_A \ll H^{-1}$ and $R_B \ll H^{-1}$), but the distance between the regions of support of the modes, $\Delta x_{AB}$, is large compared to the same scale $\Delta x_{AB}\gg H^{-1}$;  and 2) when the size of the supports of the two modes are much larger than the Hubble radius, i.e., when $R_A \gg H^{-1}$ and $R_B \gg H^{-1}$. 
\end{prop}
\end{comment}
\begin{proof}
We showed above that, in the regime $RH \gg1$, $\nu_I^{2} \gg 1$, $I=A,B$. Similar arguments imply $\nu_{\pm}^{2} \gg 1$. Using this, the mutual information can be approximated as: 
\begin{equation}
     \mathcal{I}(A,B) \underset{RH\gg 1}{\sim} \frac{1}{2} \log_2\left(\frac{\nu_A^2 \nu_B^2}{\nu_+^2 \nu_-^2}\right) \,.
\end{equation}
The leading order dependence in $RH$ cancels out in the argument of the logarithm, making $\mathcal{I}(A,B)\sim \mathcal{O}((RH)^0)$, i.e., independent of $R H$, in the limit $RH \gg 1$. 

On the other hand, for large separations %the regions can be treated as point-like
%if the the two modes are far away relative to the size of their support, we can talk about the distance between them. 
%In this regime, the regions can be treated as point-like, and 
the mutual information's dependence on distance is governed by the asymptotic behavior of the correlations in Eqs.~\eqref{phiphi}-\eqref{phipi}. Employing standard tools of asymptotic harmonic analysis, we find that the large-separation behavior is dominated by the Sobolev product  $\mathrm{Re}(f_{A}^{(i)}|f_B^{(j)})_{-\frac{3}{2} + {\mu}^2}$, 
%contained in the field-field correlations, 
which decays  with the separation between modes as $\sim(\Delta x_{AB})^{-2{\mu}^2}$ %{\color{green} \st{$(\Delta x_{AB}/H)^{-2\bar{\mu}^2}$} In any case, this should be rewritten as $(\Delta x_{AB} H)^{-2\bar{\mu}^2}$. Nevertheless, I would say that it would be more correct to write $(\Delta x_{AB} H/(R H))^{-2\bar{\mu}^2}$ }
, and produces  $\mathcal{I}(A,B)\sim (\Delta x_{AB})^{-4{\mu}^2}$ ---which is almost scale invariant when ${\mu}\ll 1$.
In contrast, in the Minkowski limit, the term $(f_{A}^{(i)}|f_B^{(j)})_{-\frac{1}{2}} \sim (\Delta x_{AB})^{-2}$ dominates, producing the well-known result $\mathcal{I}^{\rm Mink}(A,B)\sim (\Delta x_{AB})^{-4}$. Hence, at large separations the almost-scale invariant term in the Bunch-Davies vacuum dominates. 
\end{proof}

\begin{figure}
    \centering
    \begin{tikzpicture}
        \node (MI) at  (current page.center) {\includegraphics[width=0.46\textwidth]{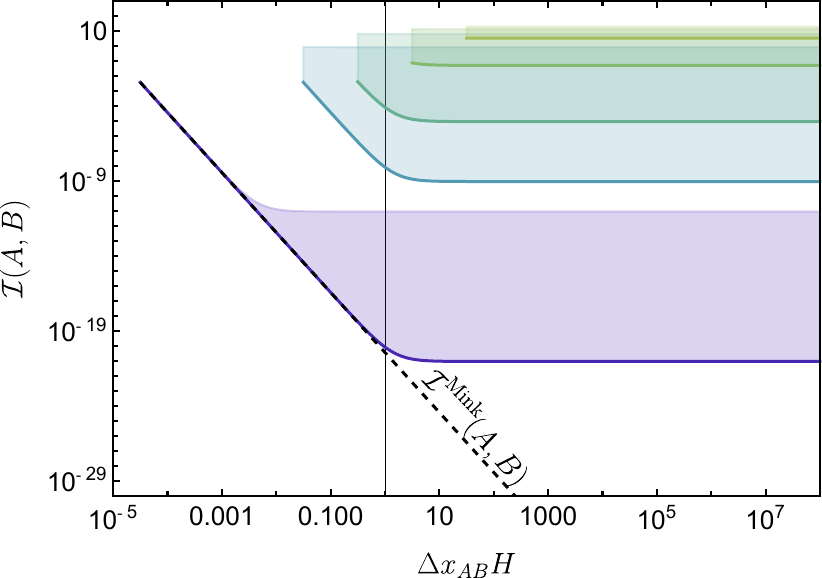}};
        \node[text={rgb:red,74;green,39;blue,178}] at ([xshift=1.5cm,yshift=-0.55cm]MI.center) {$RH = 10^{-5}$};
        \node[text={rgb:red,83;green,155;blue,181}] at ([xshift=1.5cm,yshift=1.25cm]MI.center) {$RH = 10^{-2}$};
        \node[text={rgb:red,104;green,175;blue,147}] at ([xshift=1.5cm,yshift=1.85cm]MI.center) {$RH = 10^{-1}$};
        \node[text={rgb:red,131;green,186;blue,112}] at ([xshift=1.5cm,yshift=2.4cm]MI.center) {$RH = 1$};
         \node[text={rgb:red,163;green,190;blue,86}] at ([xshift=2.25cm,yshift=2.7cm]MI.center) {$RH = 10$};   %\node[anchor=east] at ([xshift=4.5cm,yshift=3cm]MI.center) {$RH = 10^6$};
    \end{tikzpicture}
    \caption{Mutual information of two non-overlapping single-mode subsystems versus their separation (in units $H$). The chosen modes are spherically symmetric, and defined by $\gamma_I^{(1)}=(f_I(\vec x),0)$, $\gamma_I^{(2)}=(0,f_I(\vec x))$, with  $f_I(\vec x)= N\  \left(1 -\frac{|\vec{x} - \vec{x}_I|^2}{R^2}\right)\,  \Theta\left( 1 - \frac{|\vec{x} - \vec{x}_I|}{R}\right)$, $I=A,B$. $x_I$ denotes the center of each mode, $R$ the radius of their support, and $N$ is a normalization constant. This figure is obtained numerically (i.e., without analytical approximations). It shows that: (i) When $RH\ll1$ (sub-Hubble support) and $\Delta x_{AB}H\ll 1$ (sub-Hubble separation), $I(A,B)$ in the Bunch-Davies approaches the Minkowski value. %; in particular, it falls of as $\Delta x_{AB}^{-4}$. 
    (ii) For $RH\ll 1$ but $\Delta x_{AB}H\gg 1$  (sub-Hubble support but super-Hubble separation), $\mathcal{I}(A,B)$ becomes almost scale-invariant. %{\color{green} \st{(iii) When $RH\gg 1$, $I(A,B)$ becomes independent of $RH$}  PRM: This is not really visible anymore in this version of the plot. Should we remove it, of add more lines in a different color? } {\color{blue} Let's removed it then; we already say it in the text.}.
    The shaded region shows the variation of the mutual information when $\mu^2$ changes from  $\mu^2 = 10^{-10}$ (bottom) to  $\mu^2 = 10^{-15}$ (top). }
    \label{MI}
\end{figure}

%A powerful tool for this purpose is mutual information, which encapsulates the total amount of correlations between two subsystems. In this section, we will define mutual information and explore how it captures the change in correlations due to the curvature of spacetime, focusing on the localized observables in question.

% Correlations are a correlations between the pair of modes $(\hat{O}_A^{(1)},\hat{O}_A^{(2)})$ and $(\hat{O}_B^{(1)},\hat{O}_B^{(2)})$

{\bf Entanglement.} 
Logarithmic negativity (LN) %quantifies entanglement 
 is a measure of entanglement \cite{vidal02,plenio05,werner01bound} %and is 
applicable to pure and mixed states (unlike entropy). 
%If  $\hat \rho_{AB}$ denotes the density matrix of a two-mode system, % The Logarithmic Negativity with respect to the partition $A|B$ is defined as
%\be {\rm LN}(\hat \rho_{AB})=\log_2 ||\hat \rho^{\top_B}||_1\, \label{eq:defLN},\ee
%Here, $\hat \rho$ represents the density matrix of the system, 
%where $\hat \rho^{\top_B}$ represents the partial transpose of $\hat \rho_{AB}$ with respect to subsystem  B, and $||\cdot||_1$  the trace norm. %A non-zero LN value indicates a violation of the Positivity of the Partial Transpose criterion for quantum states \cite{peres96}. 
For Gaussian states and two-mode systems, LN can be computed from the invariant $\tilde \nu_-$ as $ {\rm LN}(\hat \rho_{AB})={\rm max}[0,-\log_2 \tilde \nu_-].$ 
LN is non-zero if and only if the state is entangled; a higher LN value corresponds to more entanglement.
\begin{prop}
    The entanglement between any two non-overlapping and compactly supported single-mode subsystems of a scalar field with mass $m \ll H$ prepared in the Bunch-Davies vacuum is, when the supports of each mode are larger than the Hubble radius ($R_I H \gg 1$), no bigger than it would be in the Minkowski vacuum, regardless of their separation. 
    %
    %(1) When the supports of  modes $A$ and $B$,  are small compared to the Hubble radius ($R_I H \ll $, I=A,B), but the distance between them is large  ($\Delta x_{AB}  H\gg 1$); and (2)  When the supports of A and B are are large ($R_I H \gg 1$), regardless of their separation. {\color{blue} Fro consistency with other sections, should we restrict to case (2) and just  mention }
\end{prop}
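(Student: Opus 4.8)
The plan is to reduce the comparison of logarithmic negativities to the single invariant $\tilde\nu_-$. Because ${\rm LN}(\hat\rho_{AB})={\rm max}[0,-\log_2\tilde\nu_-]$ is a monotonically decreasing function of $\tilde\nu_-$ that vanishes as soon as $\tilde\nu_-\geq1$, it suffices to prove the stronger statement that, in the Bunch-Davies vacuum with $R_IH\gg1$, one has $\tilde\nu_-\geq1$. This makes ${\rm LN}^{\rm dS}=0$, which is trivially no bigger than ${\rm LN}^{\rm Mink}\geq0$ at any separation; in other words, I would show that super-Hubble local two-mode subsystems are \emph{separable}.

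The key is a purely algebraic inequality. The symplectic eigenvalues of the partial transpose satisfy $\tilde\nu_+^2+\tilde\nu_-^2=\tilde\Delta$ and $\tilde\nu_+^2\,\tilde\nu_-^2={\rm det}\,{\bm \sigma}_{AB}$, with both real and positive. Since $\tilde\nu_-^2\geq0$ forces $\tilde\nu_+^2\leq\tilde\Delta$, and ${\rm det}\,{\bm \sigma}_{AB}>0$ forces $\tilde\nu_+^2>0$, one has
\begin{equation}
\tilde\nu_-^2=\frac{{\rm det}\,{\bm \sigma}_{AB}}{\tilde\nu_+^2}\geq\frac{{\rm det}\,{\bm \sigma}_{AB}}{\tilde\Delta}\,.
\end{equation}
The problem then reduces to showing that the right-hand side diverges as $RH\to\infty$, which I would do by a leading-power analysis in $RH$ of the invariants appearing in it.

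To carry this out I would insert the asymptotic two-point functions \eqref{phiphi}--\eqref{phipi} into the invariants and organize the computation by the case analysis of the entropy proposition. The structural fact is that each single-mode invariant ${\rm det}\,{\bm \sigma}_I$ grows with a strictly positive power of $RH$ (this is the content of $\nu_I^2\gg1$), that ${\rm det}\,{\bm \sigma}_{AB}$ grows with the \emph{sum} of the powers of ${\rm det}\,{\bm \sigma}_A$ and ${\rm det}\,{\bm \sigma}_B$, while $\tilde\Delta={\rm det}\,{\bm \sigma}_A+{\rm det}\,{\bm \sigma}_B-2\,{\rm det}\,{\bm C}$ grows only with the larger of the two (the cross term ${\rm det}\,{\bm C}$ being no larger, by Cauchy-Schwarz applied to the Sobolev products). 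Hence ${\rm det}\,{\bm \sigma}_{AB}/\tilde\Delta$ carries the smaller of the two single-mode powers, which is strictly positive, and diverges. Concretely, in the generic case the leading term of ${\rm det}\,{\bm \sigma}_{AB}$ is $(RH)^{8-8\mu^2}$ times the Gram determinant of $\{f_A^{(1)},f_A^{(2)},f_B^{(1)},f_B^{(2)}\}$ in the product $(\cdot|\cdot)_{-3/2+\mu^2}$, and $\tilde\Delta\sim(RH)^{4-4\mu^2}$, so the ratio grows like $(RH)^{4-4\mu^2}$. The decisive ingredient is that \emph{non-overlapping} supports make these four smearings linearly independent, so the associated Schur complement is full rank and the Gram determinant strictly positive; degenerate mode choices lower all powers in tandem but leave the ratio growing.

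The main obstacle is controlling the leading coefficient of the $4\times4$ determinant ${\rm det}\,{\bm \sigma}_{AB}$ and ensuring it does not cancel. Since $\tilde\nu_-^2$ is the \emph{smaller} root, a direct expansion of $\tilde\Delta-\sqrt{\tilde\Delta^2-4\,{\rm det}\,{\bm \sigma}_{AB}}$ is prone to spurious cancellations; the inequality $\tilde\nu_-^2\geq{\rm det}\,{\bm \sigma}_{AB}/\tilde\Delta$ is exactly what converts this delicate difference into a benign ratio of leading powers. The remaining subtlety is the saturation of the Cauchy-Schwarz and H\"older inequalities for special smearing functions, where the leading Gram determinant vanishes and one must descend to subleading orders; as in the previous propositions, I would treat these configurations separately and defer the details to \cite{longpaper}.
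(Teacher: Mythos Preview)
Your proposal is correct (in the generic case) and takes a genuinely different route from the paper. The paper expands $\tilde\nu_-^2$ directly as $(\tilde\nu_-^{\rm Mink})^2+(RH)^{4-4\mu^2}\tilde{\mathcal F}_-+\ldots$, and then proves $\tilde{\mathcal F}_-\geq0$ using the Cauchy--Schwarz inequality for Sobolev products together with Fischer's inequality for block-positive matrices; this yields $\tilde\nu_-^{\rm dS}\geq\tilde\nu_-^{\rm Mink}$ with explicit control of the leading correction. You instead sidestep the square-root expansion altogether via the bound $\tilde\nu_-^2\geq{\rm det}\,{\bm\sigma}_{AB}/\tilde\Delta$ and show the ratio diverges, because the $4\times4$ Gram determinant in $(\cdot|\cdot)_{-3/2+\mu^2}$ carries twice the leading power of any $2\times2$ sub-determinant. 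This actually proves the stronger statement that super-Hubble two-mode states are \emph{separable} (consistent with Fig.~\ref{fig:entanglement}). Your argument is structurally cleaner --- the positivity input is just ``a Gram matrix of independent vectors has positive determinant'' rather than Fischer --- while the paper's route yields quantitative information about how $\tilde\nu_-^2$ grows. Two small caveats: non-overlapping supports of $A$ and $B$ alone do not make all four $f_I^{(i)}$ linearly independent; you also need $f_I^{(1)},f_I^{(2)}$ independent within each region, which is precisely the generic-case hypothesis. And the claim that ``degenerate mode choices lower all powers in tandem'' is where the real work hides in the non-generic cases --- it needs the same case-by-case analysis as Proposition~1, which both you and the paper defer to \cite{longpaper}.
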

\begin{proof}
    Because LN decreases monotonically  when $\tilde \nu_-$ increases, it suffices to prove that $\tilde \nu_-$ in the Bunch-Davies vacuum is larger than or equal to $\tilde \nu_-$ in the Minkowski vacuum when $m/H\ll 1$ and $RH\gg 1$. In this regime, we find  
\be \tilde \nu_-^2=(\tilde \nu^{\rm Mink}_-)^2+(RH)^{4-4{\mu}^2} \tilde{\mathcal{F}}_-\, (1 + \mathcal{O}({\mu}^2)) + \mathcal{O}((RH)^3).
\ee
Using again the Cauchy–Schwarz inequality for Sobolev norms and  Fischer's inequality, we find that $\tilde{\mathcal{F}}_-$, which depends on the form of the smearing functions and the separation between the two modes, but not on   $RH$, is non-negative. 
This proof assumes $(f_I|f_J)_{-\frac{3}{2}}\neq 0$, but can be extended to  choices of smearing functions for which this is not satisfied \cite{longpaper}. 
%
%{\color{green} You are right,  non-negative is more rigorous. Also, the particular form in Eq. (12) assumes that the modes we chose are not restricted in the sense of the classification in Prop. 1. Maybe we should mention that here we assumed that we are in a similar situation as case 1) above (all the smearing functions are non-zero and the degrees of freedom in A and B are different). One also needs to prove this statement for all possible special cases (some of the functions coincide, are 0, or are designed in such a way that they do not see the long-distance correlations). We omit the proof for these special cases as it is similar in spirit to the proof outlined here. Note that in some concrete cases, the proof can be strengthened to every regime of $RH$...}.
%Thus, for large regions $\mathfrak{R}_H \gg 1$, the smallest symplectic eigenvalue $\tilde{\nu}_-^2$ in de Sitter spacetime exceeds its counterpart for corresponding modes in Minkowski spacetime, indicating a suppression of entanglement. 
%\end{proof}
\end{proof}

We conclude that local super-Hubble  modes are less \textit{entangled} in the Bunch-Davies vacuum than in Minkowski despite being more \textit{correlated}. See Fig.~\ref{fig:entanglement} for an example.
\\

%it has been shown above using Mutual Information. 

%This is true even  though they are more correlated, as it has been shown above using Mutual Information. 

%We conclude that, although the analysts of Mutual Information shows that local modes are more correlated in the Bunch-Davies vacuum, they are less entangled. Hence, the correlations encoded in the Mutual Information do not originate from quantum entanglement.

%We conclude that, although the analysts of Mutual Information shows that local modes are more correlated in the Bunch-Davies vacuum, they are less entangled. Hence, the correlations encoded in the Mutual Information do not originate from quantum entanglement.\\

\begin{figure}
    \centering
    \begin{tikzpicture}
       \node (C) at (current page.center){ \includegraphics[width=0.47\textwidth]{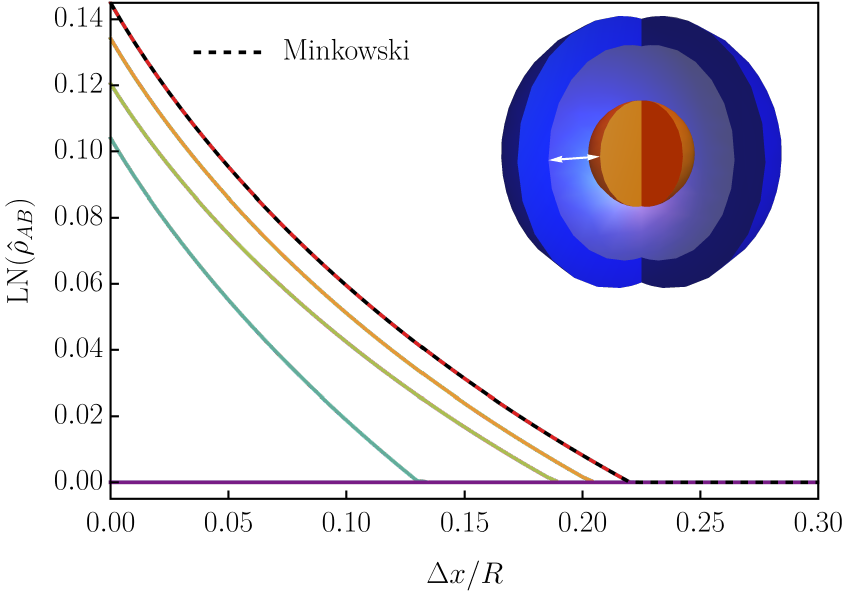}};
       \node at ([xshift=1.5cm,yshift=1.2cm]C.center) {\color{white}$\Delta x$}; 
       \node[fill=white,rotate=-45,inner sep=0pt] at ([xshift=-1.5cm,yshift=0.9cm]C.center) {\color{mrainbow1} $RH = 10^{-5}$ };
       \node[fill=white,rotate=-36,inner sep=0pt] at ([xshift=.cm,yshift=-0.75cm]C.center) {\color{mrainbow2} $RH = 10^{-2}$ };
       \node[fill=white,rotate=-44,inner sep=0pt] at ([xshift=-1.75cm,yshift=0.45cm]C.center) {\color{mrainbow3} $RH = 10^{-1}$ };
        \node[fill=white,rotate=-45,inner sep=0pt] at ([xshift=-1.cm,yshift=-1.cm]C.center) {\color{mrainbow4} $RH = 1$ };
         \node[fill=white,inner sep=1pt] at ([xshift=-2.25cm,yshift=-1.8cm]C.center) {\color{mrainbow6} $RH = 10$ };
    \end{tikzpicture}
    \caption{LN between two single-modes, compactly supported within a sphere of radius $R$ and a shell around it. The mode in the ball is  defined by $\gamma_A^{(1)} = (0,f_A(\vec{x}))$ and $\gamma_A^{(2)} = (-f_A(\vec{x}),0)$, where $f_A(\vec{x}) = N \left(1-\frac{|\vec{x}|^2}{R^2} \right) \Theta\left(1-\frac{|\vec{x}|}{R} \right)$. The mode in the shell is defined by $\gamma_S^{(1)} = (0,f_S(\vec{x}))$ and $\gamma_S^{(2)} = (-f_S(\vec{x}),0)$, where   $f_{S}(\vec{x}) = \Big(|\vec{x}|-(R_S - d)\Big) \Big((R_S+d)-|\vec{x}| \Big) \Theta\Big(|\vec{x}|-(R_S-d)\Big)\Theta\Big((R_S+d) - |\vec{x}|\Big)\,,$ with $R_S \pm d$ are the outer/inner radii of the shell and $\mu=10^{-2}$. %{\color{green}PRM: $\mu = 10^{-2}$. This means that $m/H = \sqrt{9/4 - (\mu^2 - 3/2)^2} \approx 0.0173 $ }
This figure shows: (i) LN falls of exponentially with the radial distance $\Delta x$ between shell and ball modes. (ii) LN decreases when $H$ increases, illustrating the content of Proposition 3---LN agrees with the Minkowski result when $RH\to 0$ \cite{Agullo:2023fnp}, and  vanishes for $RH$ larger than a threshold value.}
    \label{fig:entanglement}
\end{figure}

{\bf Entanglement between a mode and a region.}
We extend the previous analysis by evaluating the entanglement between a single mode and a set of modes supported within a region. We focus here on a specific example. Nevertheless,  within this limitation, the analysis serves to extend the previous discussion beyond pairs of modes.

The set up is the following. Subsystem $A$ consists of a single mode supported on a spherical shell; we use the same mode as in Fig.~\ref{fig:entanglement}. Subsystem $B$ is made of modes supported within a sphere concentric with the shell and of smaller radius. To construct a basis for these modes, we start with the polynomial functions $f^{(\delta)}(\vec{x}) = \left(1 - \frac{|\vec{x}|^2}{R^2}\right)^{\delta}\, \Theta\left(1 - \frac{|\vec{x}|}{R}\right)$, where $\delta \in \mathbb{Z}_+$. We then use a symplectic version of the Gram-Schmidt orthogonalization algorithm to generate a commuting set of modes. By restricting $\delta \leq n_B$, we introduce an ultraviolet cut-off within subsystem $B$.

Fig.~\ref{fig:shellballmany} shows the logarithmic negativity versus $n_B$.\\

\begin{figure}
\begin{tikzpicture}
    \node (c) at (current page.center) {\includegraphics[width=0.47\textwidth]{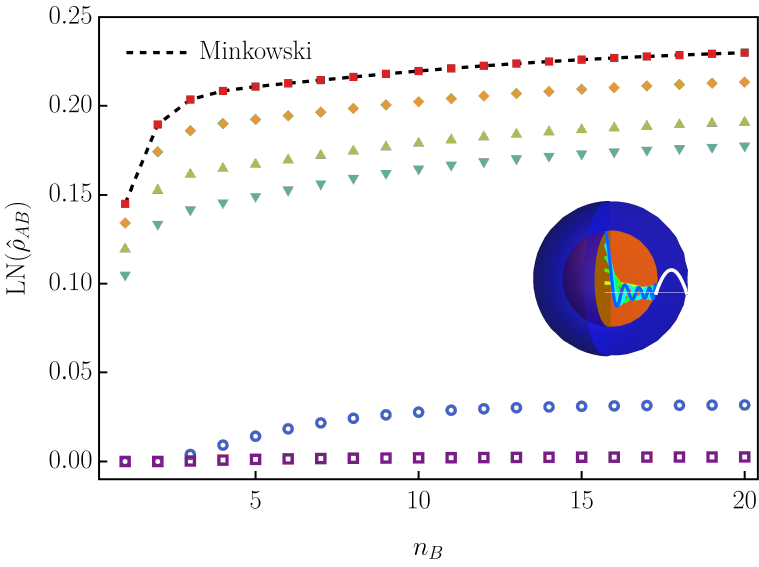}};
    \node[fill=white,inner sep=0pt] at ([xshift=2cm,yshift=2.5cm]c.center) {{\color{mrainbow1}$RH = 10^{-5}$}};
    \node[fill=white,inner sep=0pt] at ([xshift=1.05cm,yshift=2.1cm]c.center) {\color{mrainbow2}$RH = 10^{-2}$};
    \node[fill=white,inner sep=1pt] at ([xshift=0.25cm,yshift=1.55cm]c.center) {\color{mrainbow3}$RH = 10^{-1}$};
    \node[fill=white,inner sep=1pt] at ([xshift=-1.05cm,yshift=1.05cm]c.center) {\color{mrainbow4}$RH = 1$};
    \node[fill=white,inner sep=0pt] at ([xshift=0.45cm,yshift=-1.4cm]c.center) {\color{mrainbow5}$RH = 5$};
    \node[fill=white,inner sep=1pt] at ([xshift=1cm,yshift=-1.95cm]c.center) {\color{mrainbow6}$RH = 10$};
\end{tikzpicture}
\caption{LN between a mode supported in a spherical shell and a set of $n_B$ independent modes supported within a sphere (we use $\mu = 10^{-2}$). To maximize entanglement, the radius of the sphere is chosen to coincide with the inner radius of the shell. This figure shows: (i) LN changes with the cut-off $n_B$. (ii) LN monotonically decreases as $H$ increases, showing that de Sitter's curvature reduces the entanglement between $A$ and $B$. }
%This plot corresponds to $m = 10^{-2}\, H$.
%{\color{blue} We have checked that the conclusions drawn from this figure remain valid for larger $m$, even when $m > H$.}}
\label{fig:shellballmany}
\end{figure}

%.

{\bf Where is the entanglement?} We have shown that increasing $H$ decreases the entanglement between pairs of super-Hubble, non-overlapping modes, while the entropy of each mode increases. Since this entropy measures the degree of entanglement between the mode and the rest of the field modes, larger entropy indicates that the theory contains more entanglement for larger $H$.

These two results seem in tension. To understand why they are not, we consider the \emph{partner}  of a given mode $A$, denoted as $A_P$ onwards. It is defined as the single-mode subsystem that purifies $A$, i.e., the mode for which the reduced state $\hat{\rho}_{AA_P}$ is pure. If the field is prepared in a pure state, the partner of $A$ exists and is unique. It can be computed using the tools in \cite{Botero_2003,hotta2015partner,Trevison_2019} (see also \cite{partnerformula}).

\begin{prop}
The partner $A_P$ of a compactly supported single-mode $A$ is, when the field is prepared in the Bunch-Davies vacuum, not compactly supported. For typical modes $A$, the smearing functions $f^{(i)}_{Ap}(\vec x)$
 defining $A_P$ fall off at least as $r^{-2{\mu}^2}$ as $r \to \infty$ when $\mu \ll 1$. In contrast, in the Minkowski vacuum, %{\color{green} `the slowest possible decaying' smearing function for the partner} 
 they fall off significantly faster, as $r^{-2}$ for $m=0$, and as $e^{-m r}$ for $m \neq 0$.
\end{prop}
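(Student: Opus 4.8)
\emph{Proof strategy.} The plan is to split the claim into a qualitative part (non-compactness, valid for every $A$) and a quantitative part (the falloff rate, valid for typical $A$), and to attack them with two complementary tools. For the non-compactness I would argue by contradiction using the defining property of the partner, and for the rate I would use the explicit partner construction of \cite{Botero_2003,hotta2015partner,Trevison_2019,partnerformula} together with the position-space asymptotics of the two-point functions in \eqref{phiphi}--\eqref{phipi}.

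First, non-compactness. Suppose $A_P$ were compactly supported, say within a bounded region $U$. For a Gaussian pure global state, purity of $\hat\rho_{AA_P}$ is equivalent to the statement that the two-mode subsystem $\{A,A_P\}$ is completely decoupled from its symplectic complement, i.e.\ all symmetrized cross-correlators between the operators defining $\{A,A_P\}$ and any operator commuting with both must vanish. I would then pick a test mode $C$ compactly supported in a region disjoint from both $\mathrm{supp}(\gamma_A)$ and $U$; since $\hat O_{\gamma_C}$ then commutes with the operators of $\{A,A_P\}$, decoupling forces $\langle\{\hat O_{\gamma_A},\hat O_{\gamma_C}\}\rangle=0$. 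But by \eqref{phiphi}--\eqref{phipi} this symmetrized correlator is fixed by the position-space two-point functions, which decay only as a power law and remain nonzero at arbitrarily large separation. This contradiction establishes that $A_P$ cannot be compactly supported, for any $A$.

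Second, the falloff rate. Using the partner formula, the phase-space functions $f^{(i)}_{A_P},g^{(i)}_{A_P}$ are obtained by acting on the compactly supported data of $A$ with the complex structure built from $\bm\sigma$ and $\bm\Omega$ and projecting off the (compactly supported) $A$-component; concretely this writes them as the position-space kernels of $\langle\{\hat\Phi,\hat\Phi\}\rangle$, $\langle\{\hat\Phi,\hat\Pi\}\rangle$ and $\langle\{\hat\Pi,\hat\Pi\}\rangle$ convolved with $f_A^{(i)},g_A^{(i)}$. For $r=|\vec x|\to\infty$ each integral is controlled by the large-separation tail of its kernel, and since the three-dimensional Fourier transform of $|\vec q|^{2s}$ decays as $r^{-3-2s}$, the Sobolev order $s=-\tfrac32+{\mu}^2$ appearing in \eqref{phiphi} yields the slowest tail, $r^{-2{\mu}^2}$, which therefore dominates and gives $f^{(i)}_{A_P}(\vec x)\sim r^{-2{\mu}^2}$. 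The leading coefficient is proportional to the zero mode $\tilde f_A^{(i)}(0)=\int f_A^{(i)}$, nonzero for \emph{typical} modes (the hypothesis in the statement) and vanishing only for fine-tuned choices, for which the subleading tail takes over; I would defer that refinement to \cite{longpaper}. The Minkowski statements follow from the same computation in the limit $H\to 0$: the dominant $\Phi\Phi$ kernel is then the Sobolev order $-\tfrac12$, with transform $\sim r^{-2}$, while restoring $m\neq0$ replaces it by the massive equal-time two-point function, whose Yukawa tail $\sim e^{-mr}$ governs the decay.

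The hard part will be the middle step of the rate argument: showing that the slow $r^{-2{\mu}^2}$ tail of the de Sitter kernel genuinely survives in $f^{(i)}_{A_P}$ rather than being cancelled, either by the projection onto the symplectic complement of $A$ or by interference among the $\Phi\Phi$, $\Phi\Pi$ and $\Pi\Pi$ contributions that assemble the complex structure. Making this precise requires tracking the leading coefficient through the partner formula and verifying it reduces to a nonvanishing moment of the defining functions; this is exactly where the ``typical modes'' assumption enters and where the analysis is most delicate.
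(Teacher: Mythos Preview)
Your proposal is correct and matches the paper's approach for the quantitative part. The paper's own treatment of this proposition is a one-sentence sketch: the partner's smearing functions are obtained by integrating the defining functions of $A$ against the two-point correlation functions, and these inherit the almost-scale-invariant $r^{-2\mu^2}$ tail of the Bunch--Davies correlator (with the full proof deferred to \cite{longpaper}). Your rate argument is exactly this, carried out in more detail, including the identification of the Sobolev order $-\tfrac32+\mu^2$ as the controlling kernel and the role of $\tilde f_A^{(i)}(0)$ in the ``typical modes'' caveat.

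Your non-compactness argument is a genuine addition relative to the paper's sketch. The paper obtains non-compactness only as a byproduct of the explicit decay rate, which strictly speaking covers typical modes. Your contradiction argument---purity of $\hat\rho_{AA_P}$ in a pure global Gaussian state forces vanishing cross-correlations with every mode in the symplectic complement, but a mode $C$ supported disjointly from both compact regions still has nonzero $\langle\{\hat O_{\gamma_A},\hat O_{\gamma_C}\}\rangle$ by the power-law tail of the two-point function---is cleaner and works for \emph{all} compactly supported $A$, not just typical ones. This is a modest but real improvement in generality for the qualitative half of the statement.

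You have also correctly located the one genuinely delicate step: verifying that the $r^{-2\mu^2}$ tail is not cancelled when the $\Phi\Phi$, $\Phi\Pi$, $\Pi\Pi$ contributions are assembled into the complex structure and the $A$-component is projected off. The paper likewise defers this to \cite{longpaper}, so your proposal is at the same level of completeness as the published sketch.
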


This proposition follows from the almost-scale invariant form of the two-point correlation functions of field operators in the Bunch-Davies vacuum, and the fact that the smearing functions defining the partner mode are obtained by integrating the smearing functions defining $A$ against the two-point correlation functions. For the proof, see \cite{longpaper}.
%{\color{red} (Mention special functions).} 
%
%{\color{green} [The decay we mention in the proposition above can be altered (can be faster) if one chooses special functions (designed to `not feel' long-distance correlations), or by fine-tuning the modes (e.g. by choosing some functions to vanish).% The important note is that the slowest possible decay in the proposition above is a universal property of the spacetime and the state of the quantum field (in the sense that it is fully determined by the behavior of the two-point function). }

The partner mode encodes all correlations, classical and quantum, of mode $A$ with the rest of the field degrees of freedom. Because the reduced state $\hat \rho_{AA_P}$ is pure, we can use the von Neumann entropy of $A$ to measure the quantum correlations. 
%The partner mode encodes all correlations, classical and quantum, mode $A$ has with the rest of the field degrees of freedom. Because Bunch-Davies is a pure state, these correlations are quantified by the von Neumann entropy of $A$. 
We found in Prop.~1 that this entropy is larger in the Bunch-Davies vacuum than in  Minkowski, meaning that $A$ is more correlated and entangled with its partner.

Proposition 4 informs us about the spatial distribution of the partner mode, or equivalently,  about the distribution of the correlations and entanglement with mode $A$ that are contained in the Bunch-Davies vacuum: they are almost scale-invariant, hence spread over much longer distances than they would in Minkowski spacetime. Local observers do not have access to the entirety of the  partner mode. For them, these long-distance correlations manifest as a larger entropy for local modes. Physically, this entropy corresponds to local thermal noise, which decreases the entanglement between pairs of compactly supported modes. %{\color{red} IA: please, read carefully the last two paras.}
\\

%the partner mode is not accessible to local observers, as the correlations are almost scale-invariant, hence distributed over much longer distances than in Minkowski spacetime. For local observers, the correlations between $A$ and $A_P$ manifest as a larger entropy for local modes. Physically, this entropy corresponds to local thermal noise, which decreases the entanglement between compactly supported modes.

%\\

{\bf Discussion.}
We emphasized the importance of focusing on localized field modes to study the entanglement generated by inflation and its detectability. The formalism we used is local, free of ultraviolet divergences, and focused on system-local symplectic invariants.

Previous findings in the literature appear contradictory at first glance. On the one hand, two local particle detectors in de Sitter spacetime harvest less entanglement than in Minkowski \cite{Perche:2023nde,Steeg_2009}, suggesting that the entanglement content of the Bunch-Davies vacuum is smaller than its flat-spacetime counterpart. On the other hand, calculations of the entropy of sub-regions in de Sitter spacetime indicate the opposite \cite{maldacena_entanglement_2013}. We have found that, not only are these two results not contradictory, but one is responsible for the other: the large degree of correlations and entanglement that local modes exhibit with their partner modes, which are supported at arbitrarily large separations, is responsible for the local thermal properties of de Sitter. This local thermal entropy, in turn, acts as a natural decoherence factor for local modes.

Beyond the implications for early-universe cosmology, the analysis in this article has ramifications for formal aspects of de Sitter spacetime and, more generally, for the understanding of the interplay between curvature and entanglement in quantum field theory.

\begin{acknowledgements}
We thank J. Martin, W. van Suijlekom, and V. Vennin for discussions.  This research has been partially supported by the Blaumann Foundation.  
IA and BB both acknowledge the support of the Visiting Fellow program of Perimeter Institute for Theoretical Physics, and PRM thanks the Perimeter Institute for Theoretical Physics for their support as well as LSU for her short-term visit.  IA is additionally supported by the Hearne Institute for Theoretical Physics, by the NSF grants PHY-2409402 and PHY-2110273, by the RCS program of Louisiana Boards of Regents through the grant LEQSF(2023-25)-RD-A-04.
Research at Perimeter Institute is supported in part by the Government of Canada through the Department of Innovation, Science and Industry Canada and by the Province of Ontario through the Ministry of Colleges and Universities. %\textcolor{purple}{EMM, TRP, JPG, and BSLT thank the WINQ program for hosting them in the WINQ program on Complex and Quantum Systems, where they became aware of a series of works relevant to this manuscript.}  %Perimeter Institute and the University of Waterloo are situated on the Haldimand Tract, land that was promised to the Haudenosaunee of the Six Nations of the Grand River, and is within the territory of the Neutral, Anishinaabe, and Haudenosaunee people.
\end{acknowledgements}

%Our results reconciled to 
%Does the Bunch-Davies state contain more entanglement than Minkowski vacuum? We have analyzed this question from the view point of finite dimensional subsystem, for which the analysses is free of UV divergences. Our answer is ``yes and no''. Each individual mode compactly supported is 
%which are free of the UV divergences 

%is the origin of the Gibbons-Hawking thermal properties of the Bunch-Davies vacuum,

%=================\\
\newpage

%\vspace{3cm}

%\bibliographystyle{apsrev4-1}
\bibliography{dSbib.bib} 

\end{document}